\documentclass[letterpaper, 10 pt, conference]{ieeeconf}  

\IEEEoverridecommandlockouts                              

 \usepackage{mathrsfs}
 \usepackage{xcolor}
  \usepackage{mathtools}

\usepackage{amsmath,amssymb,amsfonts}
\usepackage{graphicx}
\usepackage{algorithm,algorithmic}

\usepackage{textcomp}

\usepackage{setspace}
\usepackage{lipsum}

\usepackage{mathtools}
\usepackage{enumerate}

\usepackage{caption}
\usepackage{subcaption}
\usepackage{algorithm}
\usepackage{float}

\newtheorem{theorem}{Theorem}

\newtheorem{corollary}{Corollary}

\newtheorem{remark}{Remark}
\newtheorem{proposition}{Proposition}
\usepackage{hyperref}

\makeatletter
\let\NAT@parse\undefined
\makeatother
\usepackage{cite}

\newtheorem{innerexample}{Example}

\newcommand{\abs}[1]{\left\lvert#1\right\rvert}

    {\begin{innerexample}[#1]\pushQED{\qed}}%
    {\popQED\end{innerexample}}
\title{\LARGE \bf
Global Exponential Stabilization of the Kinematic Bicycle Model of a Car in Polar Coordinates
}

\author{Velimir Todorovski$^{1}$, Kwang Hak Kim$^{1}$, Alessandro Astolfi$^{2}$, and Miroslav Krstić$^{1}$ 
\thanks{This work was supported by the Office of Naval Research under Grant No. N00014-23-1-2376 and  N00014-23-1-2831. 
The results and opinions in this paper are solely of the authors and do not reflect the position or the policy of the U.S. Government.}
\thanks{$^{1}$V. Todorovski, K. H. Kim, and M. Krstić are with the Department of Mechanical and Aerospace Engineering, UC San Diego, 9500 Gilman Drive, La Jolla, CA, 92093-0411, {\tt\small \{vtodorovski,kwk001,krstic\}@ucsd.edu}}%
\thanks{$^{2}$A. Astolfi is with the Computer, Electrical and Mathematical Science
and Engineering Division, King Abdullah University of Science and
Technology (KAUST), Thuwal, 23955, Saudi Arabia,
{\tt\small alessandro.astolfi@kaust.edu.sa}}
}

\begin{document}

\maketitle
\thispagestyle{empty}
\pagestyle{empty}

\begin{abstract}
At parking speeds, the kinematic bicycle is the prevailing model for car-like vehicles. Yet, despite its wide use, stabilizing feedback laws for this system are scarce in the literature, and existing designs often do not reproduce realistic parking maneuvers.
This limitation is inherent to the Cartesian coordinates, where Brockett’s condition rules out smooth static feedback stabilization.
We bypass this obstruction by transforming the system into polar coordinates together with additional "range-normalized" coordinates that encode the geometry of human-like parking maneuvers. 
In the transformed coordinates, the dynamics take a strict-feedback form, enabling a nonconventional backstepping design. We exploit the particular structure to develop smooth feedback laws that achieve global exponential stabilization in the transformed coordinates which in turn generates parking trajectories resembling the one performed by human drivers through feedback alone.
\end{abstract}


\section{Introduction}
In this note, we develop globally exponentially \textit{output}-stabilizing control laws for the kinematic bicycle model in polar coordinates, which is widely used to describe the low-speed dynamics of car-like vehicles~\cite{rajamani_vehicle_2012, carvalho2015automated}.
This result is the first of its kind to exploit the geometry of the polar coordinates in order to generate human-like parking trajectories solely through feedback.

In contrast to the common approaches to bicycle control such as model predictive control and related optimization-based methods based on discretized models (see \cite{kong2015kinematic,yu2021model} and the ref. therein), this work is concerned with the continuous time geometric control of the kinematic bicycle.

In our setting, feedback design is constrained by the topological obstructions inherent to nonholonomic systems, which do not arise in discrete models. In particular, although globally controllable, the kinematic bicycle, like other nonholonomic vehicles, cannot be asymptotically stabilized by continuous or discontinuous time-invariant feedback, as shown by Brockett as well as Ryan, Coron, and Rosier \cite{brockett1983asymptotic,ryan1994brockett,coron1994relation}.
%
\begin{figure}[t]
\centering
\includegraphics[width=0.75\linewidth]{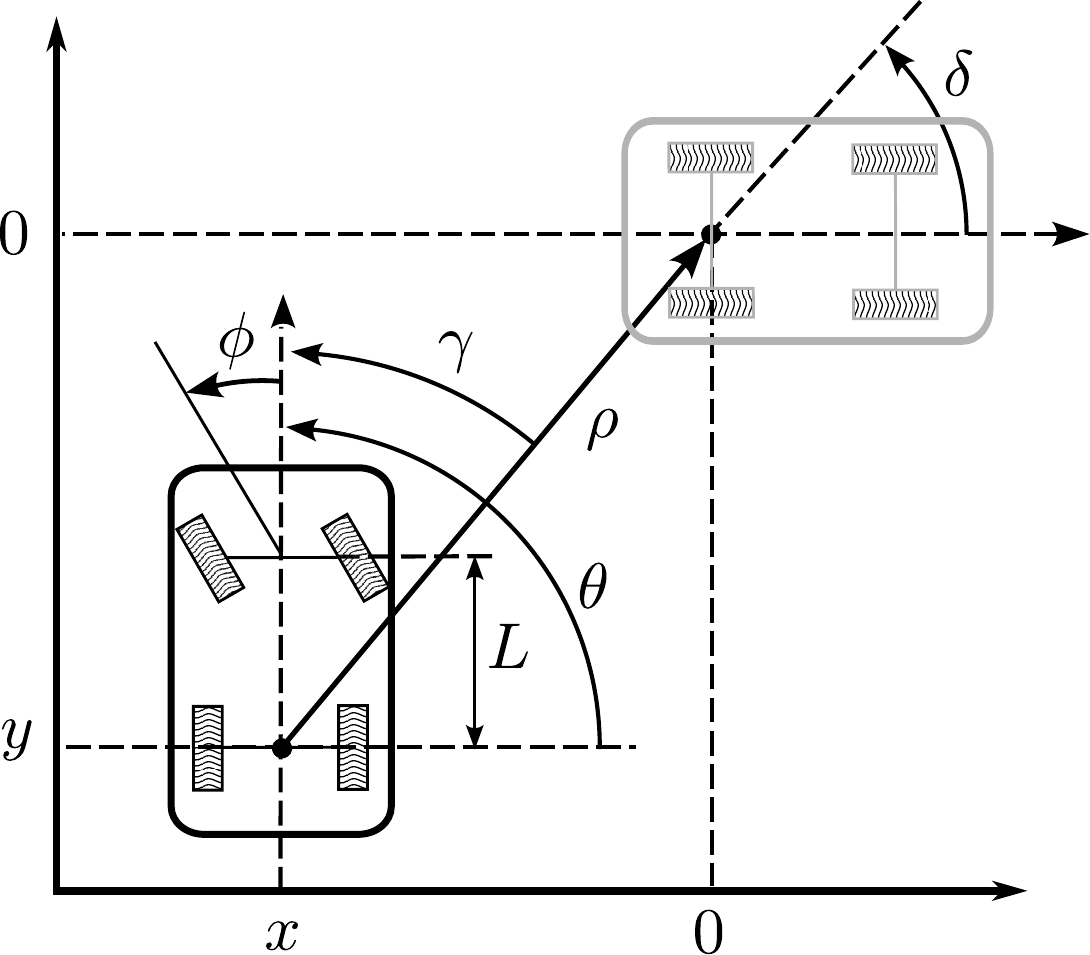}
\caption{Configuration of the  nonholonomic car-like vehicle \((x,y,\theta,\phi)\) with respect to the target \((0,0,0,0)\), and the polar coordinate transformation \((x,y,\theta,\phi)\mapsto (\rho,\delta,\gamma,\varphi)\).}
\label{fig:bicycle_cord}
\end{figure}
A standard way to address this difficulty is to transform the bicycle model from Cartesian coordinates into chained form, as in \cite{MurraySastry1993}, and then design either time-varying or discontinuous feedback laws for the resulting chained system~\cite{astolfi1996discontinuous,teel1995non,de2005feedback,morin2008motion, astolfi1995exponential_car}, thereby circumventing the Brockett--Ryan--Coron--Rosier obstruction. For the bicycle, however, the Cartesian-to-chained transformation is only valid on a restricted domain (see \cite[eqs. (8), (9)]{de2005feedback}), which substantially limits the resulting region of attraction. Moreover, time-varying feedback laws produce oscillatory trajectories and yield slow convergence to the target.
Note that the above approaches inherently rely on the driftless kinematic bicycle model, in which the longitudinal velocity and steering rate are directly actuated. This is not the setting considered in this note,  following \cite{kong2015kinematic,carvalho2015automated,rahman2023driver, black2023future}, we take longitudinal acceleration and steering angle as the inputs, so the bicycle model is not driftless and therefore cannot be converted to an equivalent chained form.

A geometric way to bypass Brockett's obstruction is provided by the use of polar coordinates, which have proved particularly successful in parking problems for the unicycle~\cite{aicardi1995,astolfi1999exponential,restrepo2020leader,wang2026further}. Due to their inherent singularity, they avoid the Brockett obstruction in the Cartesian model, thus enabling the design of smooth feedback laws that are global in the polar coordinates. The advantages of the polar coordinates are further explored in our recent works \cite{todorovski2026nonholonomicrobotparkingfeedback,kim2025nonholonomicrobotparkingfeedback}, where we develop families of control Lyapunov functions (CLFs) for the unicycle that yield inverse-optimal control laws and allow shaping of the cost functions.

However, these advantages do not extend directly to the bicycle model in polar coordinates, since Brockett's condition is only necessary, not sufficient, for the existence of a smooth stabilizer. To address this limitation, we introduce the range-normalized coordinates, which are singular in the spirit of \cite{astolfi1996discontinuous}, encode a human-like parking geometry, and permit the design of smooth control laws in the normalized coordinates through backstepping. Then, viewing the original polar coordinates as outputs of the range-normalized system, we obtain global exponential output stability \cite{karafyllis2024global} in the polar coordinates and almost global exponential attractivity in the Cartesian coordinates. Moreover, the resulting control laws are accompanied by a CLF satisfying the small control property. The design is global except on a measure-zero set corresponding to the vehicle being exactly at the parking position but misaligned with it, a configuration from which, as any driver knows, the vehicle must first move away in order to complete the parking maneuver.


\raggedbottom
\section{Cartesian and Polar Bicycle Models}
\begin{table}[t]

\renewcommand\arraystretch{1.3}
\centering
\begin{tabular}{|l|l|l|}
\hline
\textbf{Polar Coordinates}  & \textbf{Description} \\
\hline
$\rho = \sqrt{x^2 + 
y^2}$  & Distance to origin \\
\hline
$\delta = \text{{\rm atan2}}(y ,x )  + \pi$ & Polar  angle
\\
\hline
$\gamma = \delta - \theta$  & Line-of-sight (LoS) angle   \\
\hline
\end{tabular}
\caption{Polar coordinates and their expressions in terms of Cartesian coordinates. The transformation $(x,y,\theta)\mapsto (\rho,\delta,\gamma)$ is discontinuous on $\{x<0,y=0\}$ and not defined at $x=y=0$. If the target is at $(x^*, y^*, \theta^*)\neq 0$, the transformation generalizes to $\rho=\sqrt{(x-x^*)^2+(y-y^*)^2}, \delta= \text{{\rm atan2}}(y-y^* ,x-x^* ) -\theta^* + \pi, \gamma= \delta-\theta+\theta^*$.}
\label{tab:polar_coordinates}
\end{table}
We consider the following car-like model 
\begin{subequations}
    \label{eq:bicycle_cartesian}
    \begin{align}
        \dot{x} &= v \cos \theta \,, \\
        \dot{y} &= v \sin \theta \,, \\
        \dot{\theta} &= \frac{v}{L} \tan \phi \,, \\
        \dot{v}  &= a \,,
    \end{align}
\end{subequations}
where $(x(t),y(t))$ is the position of the bicycle in Cartesian coordinates, $\theta(t)$ is the heading angle,  $v(t)$ is the forward velocity, $a(t)$ is the longitudinal acceleration input, $\phi(t) \in (-\pi/2,\pi/2)$ is the forward wheel angle which is considered as an input and $L$ is the length of the wheelbase. Unlike the more commonly studied kinematic bicycle, where $v$ is directly actuated and $\phi$ enters through an integrator, \eqref{eq:bicycle_cartesian} is not driftless, which considerably limits the available feedback designs.
The bicycle represented in polar coordinates (cf. Fig.~\ref{fig:bicycle_cord}) is, under the input transformation
   $\phi = \arctan(\varphi)$, 
described by
\begin{subequations}
\label{eq:bicycle_polar}
\begin{align}
    \dot{\rho}&=-v\cos\gamma, \label{eq:dot_rho} \\
    \dot{\delta}&=\frac{v}{\rho}\sin\gamma, \label{eq:dot_delta} \\
    \dot{\gamma}&=\frac{v}{\rho}\sin\gamma-\frac{v}{L}\varphi\,,
    \label{eq:dot_gamma} \\
    \dot{v}&=a.
\end{align}
\end{subequations}
The input transformation maps the unconstrained auxiliary input $\varphi \in \mathbb{R}$ into the physically admissible steering range $\phi \in (-\pi/2,\pi/2)$. The corresponding definitions of the polar coordinates are provided in Table~\ref{tab:polar_coordinates}. Since the polar coordinates are only valid on the set $\{\rho > 0\}$, we introduce the state space
\begin{align}
\mathcal{S} := \left\{ \rho > 0 \right\} \times \mathbb{R}^3,
\label{eq:ss_S}
\end{align}
on which the states $(\rho,\delta,\gamma,v) \in \mathcal{S}$ evolve 
and for each $(\rho,\xi) \in \mathcal{S}$ with $\xi = (\xi_1,\xi_2,\xi_3)$, we define the norm
\begin{align}
|(\rho,\xi)|_{\mathcal{S}}
:= \rho + |\xi_1| + |\xi_2| + |\xi_3|.
\label{eq:metric_S}
\end{align}
In Cartesian coordinates, the Brockett--Ryan--Coron--Rosier necessary conditions rule out continuous time-invariant stabilizers for the bicycle model \eqref{eq:bicycle_cartesian}. In the polar coordinates (cf. Tab.~\ref{tab:polar_coordinates}), these conditions do not apply, since these coordinates are only defined on $\{\rho>0\}$ and \eqref{eq:bicycle_polar} has a singularity at $\rho=0$. Thus, the polar formulation removes this particular obstruction, but by itself does not imply the existence of a smooth static stabilizer. In what follows, we address this question.
\section{Backstepping Through Polar and Range-Normalized Coordinates}
The controller construction proceeds in two steps. First, we reformulate \eqref{eq:bicycle_polar} so that the strict-feedback structure in the polar coordinates becomes explicit. In particular, since the LoS angle $\gamma$ enters both $\dot{\rho}$ and $\dot{\delta}$, the system can be viewed as a triple-integrator-like chain modulated by the trigonometric terms $\sin\gamma$ and $\cos\gamma$, which limit the magnitude of the inputs that acts on $\dot{\rho}$ and $\dot{\delta}$. We then introduce range-normalized coordinates to encode how the angular variables and the velocity must scale with the remaining distance in a human-like parking maneuver, and use this structure to construct the control laws via backstepping.

\subsection{Strict-feedback form  of the polar bicycle}
Central to our backstepping approach are the bounded functions
\begin{align}
\psi_1(r,s) &= \frac{\cos(r-s)-\cos(s)}{r}, \label{eq:psi_1_definition}\\
\psi_2(r,s) &= \frac{\partial \psi_1(r,s)}{\partial s}
= \frac{\sin(r-s)+\sin(s)}{r},
\label{eq:psi_2_definition}
\end{align}
which are smooth, i.e., $C^\infty$ in $(r,s)$ when interpreted via their smooth extensions at $r=0$. In particular, these extensions satisfy
$\psi_1(0,s)=\sin s$
and $\psi_2(0,s)=\cos s$, obtained by taking the limits as $r\to 0$.
Moreover, $|\psi_1(r,s)|\le 1$ and $|\psi_2(r,s)|\le 1$ for all $(r,s)$, and
$\psi_1(0,0)=0$, $\psi_2(0,0)=1$ and
$\psi_1(0,2n\pi) = 0$, $\psi_2(0,2n\pi)=1$ for all $n\in\mathbb{Z}$. With \eqref{eq:psi_1_definition} and \eqref{eq:psi_2_definition}, the polar bicycle \eqref{eq:bicycle_polar} can be rewritten as
\begin{subequations}
\label{eq:system_in_backstep1}
\begin{align}
    \dot{\rho} &= -v \cos(z-\gamma) + v \psi_1(z,\gamma) z, \\
    \dot{\delta} &= -\frac{v}{\rho} \sin(z-\gamma) + \frac{v}{\rho} \psi_2(z,\gamma) z, \\
    \dot{\gamma} &= -\frac{v}{\rho} \sin(z-\gamma) + \frac{v}{\rho} \psi_2(z,\gamma) z - \frac{v}{L}\varphi,  \\
    \dot{v} &= a\,,
\end{align}
\end{subequations}
where $z$ is an auxiliary variable introduced for the backstepping design. To exploit the structure of \eqref{eq:system_in_backstep1}, in addition to \eqref{eq:psi_1_definition} and \eqref{eq:psi_2_definition}, the following trigonometric identities play an equally important role
\begin{equation}
\cos(\arctan(s)) = \frac{1}{N(s)}, 
\quad
\sin(\arctan(s)) = \frac{s}{N(s)},
\end{equation}
where
\begin{equation}
N(s) = \sqrt{1+s^2}\,, \label{eq:N(s)}
\end{equation}
and hence \(N(s) \ge 1\) for all \(s \in \mathbb{R}\).
Then, analogous to the exponentially stabilizing designs for the unicycle   \cite{kim2025nonholonomicrobotparkingfeedback}, 
we introduce the backstepping transformation 
\begin{equation}
 z = \gamma + \arctan(k_2\delta), \quad k_2 > 0\,. \label{eq:backstepping_z}
\end{equation}
Substituting \eqref{eq:backstepping_z} in \eqref{eq:system_in_backstep1}, we obtain
\begin{subequations}
\label{eq:system_in_backstep2}
\begin{flalign}
    \dot{\rho} &=
    -\frac{v}{N(k_2\delta)} + v \psi_1(z,\gamma) z, &\\
    \dot{\delta} &=
    -\frac{v}{\rho} \frac{k_2 \delta}{N(k_2\delta)}
    + \frac{v}{\rho} \psi_2(z,\gamma) z, & \\
    \dot{z} &= - \frac{v}{L} \varphi
    - \frac{v}{\rho}
    \left[1 + \frac{k_2}{N^2(k_2\delta)}\right]   \left[
        \frac{k_2 \delta}{N(k_2\delta)} - \psi_2(z,\gamma) z 
    \right],  &\\
    \dot{v} &= a.
\end{flalign}
\end{subequations}
Observe that the dynamics of \eqref{eq:system_in_backstep2} are fully determined by the variables $(\rho,\delta,z,v)$, since the LoS angle $\gamma$ is recovered from
$\gamma = z - \arctan(k_2\delta),$
which follows from \eqref{eq:backstepping_z}.
If the velocity $v$ was directly actuated, choosing $v= N(\delta) \rho $ would result in locally exponentially stable $(\rho,\delta)$-dynamics. Even though this is not possible since the velocity $v$ is now a state, this provides an insight on how to choose the backstepping transformation for it.
To this end, we introduce the second backstepping transformation as
\begin{equation}
 \hat{v} = v - k_1 N(k_2 \delta) \rho , \quad k_1 > 0. \label{eq:backstepping_v_hat}
\end{equation} 
Now, choosing the longitudinal acceleration as
\begin{equation}
\hspace*{-0.2cm}
\begin{aligned}[b]
a &= \tilde{a} - k_1\bigl(\hat v+k_1N(k_2\delta)\rho\bigr)\Bigg[
1+\frac{k_2^3\delta^2}{N(k_2\delta)^2} \\
&\qquad -N(k_2\delta)\left(
\psi_1(z,\gamma)
+\frac{k_2^2\delta}{N^2(k_2\delta)}\psi_2(z,\gamma)
\right)z
\Bigg] \label{eq:acceleration_first}
\end{aligned}\,,
\end{equation}
and substituting \eqref{eq:backstepping_v_hat}  in \eqref{eq:system_in_backstep2}, we obtain
{%
\allowdisplaybreaks[3]
\begin{subequations}
\label{eq:polar_bicycle_nonsingular}
\begin{align}
\dot{\rho}
&= -k_1 \rho + k_1 N(k_2 \delta)\rho \psi_1(z,\gamma)z \nonumber\\
&\hspace*{1.26cm} + \left[\psi_1(z,\gamma)z - N(k_2 \delta)^{-1}\right]\hat{v},
\displaybreak[3]\\
\dot{\delta}
&= -k_1 k_2 \delta + k_1 N(k_2 \delta)\psi_2(z,\gamma)z \nonumber\\
&\qquad \qquad \hspace*{0.19cm}+ \frac{\hat{v}}{\rho}\left[\psi_2(z,\gamma)z - k_2 N(k_2 \delta)^{-1}\delta\right],
\displaybreak[3] \label{eq:delta_dot_nonsingular}\\
\dot{z}
&= \frac{\hat{v} + k_1 N(k_2 \delta)\rho}{\rho}
\left[1 + \frac{k_2 }{N^2(k_2\delta)}\right] \times \nonumber\\
& \quad \left[\psi_2(z,\gamma)z -  \frac{k_2\delta}{N(k_2 \delta)}\right] - \frac{\hat{v} + k_1 N(k_2 \delta)\rho}{L}\varphi, \label{eq:z_dot}
\displaybreak[3]\\
\dot{\hat v}
&= \tilde{a}.
\end{align}
\end{subequations}
}%
Note that if the steering input $\varphi$ were decoupled from the velocity $\hat{v}$ in \eqref{eq:z_dot}, that is, if the vehicle were allowed to turn in place, then one could choose $\varphi$ and $\tilde a$ so that the destabilizing residual terms in $\dot{\rho}$ and $\dot{\delta}$ be cancelled and thereby obtaining exponentially stabilizing dynamics. For the kinematic bicycle, however, such a decoupling is not possible. We address this issue next.

\subsection{Human parking geometry in normalized coordinates}
Since the car-like bicycle can change its orientation only while moving, a human parking motion must simultaneously align the vehicles heading as it decreases the distance to the target, not after the target is reached. Accordingly, the angular errors $\gamma$ and $\delta$ must decay to zero no slower than the remaining distance $\rho$. 
This motivates the introduction of the range-normalized angles
\begin{align}
 \bar{\delta} &= \frac{\delta}{\rho} \,, \label{eq:bar_delta} \\
    \bar{z} &= \frac{z}{\rho} = \frac{\gamma}{\rho} + \frac{\arctan(k_2 \rho\bar{\delta})}{\rho} \,,
    \label{eq:bar_z}
\end{align}
which encode the geometry of a human-like parking maneuver. Indeed, as the distance-to-go $\rho$ becomes small, the polar angle $\delta$ and the LoS angle $\gamma$ must vanish not slower than $\rho$. In physical terms, this means that near the target the car is already nearly aligned before the last remaining distance is covered.  Note that, on the state-space \eqref{eq:ss_S}, it holds that $\rho>0$ and hence this normalization is well defined and does not impose any additional restriction on the region of attraction $\mathcal{S}$. Moreover, the backstepping stabilizing term $\arctan(k_2 \rho\bar{\delta})/\rho$ in \eqref{eq:bar_z} tends to zero as $\rho$ goes to zero.
Next, we introduce the range-normalized velocity
\begin{equation}
    \bar{v}=\frac{\hat{v}}{\rho}
    =\frac{v}{\rho}-k_1N\bigl(k_2\rho\bar{\delta}\bigr) \label{eq:bar_v}
\end{equation}
to measure the speed relative to the remaining  to the target.
This normalization is needed because the angular dynamics \eqref{eq:delta_dot_nonsingular} and \eqref{eq:z_dot} depend on the ratio $\hat{v}/\rho$, so near the target the relevant quantity is not the speed $v$ itself, but how fast it decays relative to $\rho$. 
It also fits our physical intuition, reflecting the fact that in a human parking motion the car slows down as it approaches the target, so that the remaining steering alignment can be completed before the last bit of distance is covered. Accordingly, near the parking spot, the speed must already be small relative to the remaining distance, rather than remaining excessively large as $\rho \to 0$.
Then, let the acceleration and steering inputs be 
\begin{equation}
\begin{aligned}[b]
    \tilde{a} &= \rho \Biggl[ \hat{a} -  k_1\bar{v}
+k_1N(k_2\rho\bar\delta)\psi_1(\rho\bar z,\gamma)\,\rho\bar z \bar{v}\\
&\qquad \quad \qquad \hspace*{0.23cm}\left.+\left[\psi_1(\rho\bar z,\gamma)\,\rho\bar z-\frac{1}{N(k_2\rho\bar\delta)}\right]\bar v^2
\right]\,, 
    \end{aligned}
\label{eq:acceleration_second}
\end{equation}
and
{
\allowdisplaybreaks
\begin{align}
&\varphi =
\frac{L}{k_1 N(k_2 \rho \bar{\delta})}
\Biggl\{
-\tilde{\varphi}
+k_1\bar z
-k_1 \rho N(k_2\rho\bar\delta)\psi_1(\rho\bar z,\gamma)\bar z^2
\nonumber\\
&\hspace*{-0.2cm}+\left(k_1+\frac{k_1 k_2}{N^2(k_2\rho\bar\delta)}\right)
\left(
N(k_2\rho\bar\delta)\psi_2(\rho\bar z,\gamma)\bar z
-k_2\bar\delta
\right)
\nonumber\\
&\hspace*{-0.2cm}+ \Biggl[
\left(
N(k_2\rho\bar\delta)^{-1}
-\rho\bar z\,\psi_1(\rho\bar z,\gamma)
\right)\bar z
\nonumber\\
&\hspace*{-0.2cm}+ \left(1+\frac{k_2}{N^2(k_2\rho\bar\delta)}\right)
\left(
\psi_2(\rho\bar z,\gamma)\bar z
-\frac{k_2\bar\delta}{N(k_2\rho\bar\delta)}
\right)
\Biggr]\bar v
\Biggr\},
\label{eq:steering_1}
\end{align}
}
respectively. Substituting \eqref{eq:bar_delta}, \eqref{eq:bar_z}, and \eqref{eq:bar_v} into \eqref{eq:polar_bicycle_nonsingular} yields 
\begin{subequations}
\label{eq:polar_bicycle_singular}
\begin{align}
\dot{\rho}
&=
-k_1\rho
+k_1N(k_2\rho\bar\delta)\psi_1(\rho\bar z,\gamma)\,\rho^2\bar z \nonumber\\
&\qquad \qquad \qquad 
+\left[\psi_1(\rho\bar z,\gamma)\,\rho^2\bar z-\frac{\rho}{N(k_2\rho\bar\delta)}\right] \bar v,
\\[2mm]
\hspace*{-0.2cm}\dot{\bar\delta}
&=
-k_1(k_2-1)\bar\delta  \\ &\quad +k_1N(k_2\rho\bar\delta)\Bigl[\psi_2(\rho\bar z,\gamma)-\rho\bar\delta\,\psi_1(\rho\bar z,\gamma)\Bigr]\bar z \nonumber \\
&
\hspace*{0.2cm}+\Biggl[
\bigl(\psi_2(\rho\bar z,\gamma)-\rho\bar\delta\,\psi_1(\rho\bar z,\gamma)\bigr)\bar z  +\frac{(1-k_2) \bar\delta}{N(k_2\rho\bar\delta)}
\Biggr] \bar{v},
\\[1mm]
\dot{\bar z}
&= \tilde{\varphi} 
 - \frac{\varphi}{L}\bar{v} ,
\\
\dot{\bar v}
&= \hat{a}.
\end{align}
\end{subequations}
whose range-normalized states $(\rho,\bar{\delta},\bar{z},\bar{v})$ evolve on $\mathcal{S}$.

\section{Exponential Behavior in Range-Normalized, Polar, and Cartesian Coordinates}
\subsection{Global exponential stability in normalized coordinates }
\begin{theorem}
\label{thm:normalized_coordinates_exp_stability}
    Consider the system \eqref{eq:polar_bicycle_singular} in closed-loop with 
    \begin{align}
        \tilde{\varphi} =& -k_3 \bar{z} - k_1N(k_2\rho\bar\delta)\psi_1(\rho\bar z,\gamma)\,\rho^3 \nonumber \\
        &\qquad  -k_1N(k_2\rho\bar\delta)\Bigl[\psi_2(\rho\bar z,\gamma)-\rho\bar\delta\,\psi_1(\rho\bar z,\gamma)\Bigr] \bar{\delta} \,,\label{eq:varphi_thm}
    \end{align}
    and
    \begin{equation}
    \hspace*{-0.3cm}
    \begin{aligned}[b]
        &\hat{a} = -k_4 \bar{v} - \left[\psi_1(\rho\bar z,\gamma)\,\rho^2\bar z-\frac{\rho}{N(k_2\rho\bar\delta)}\right] \rho  \\
        &-\Biggl[
\bigl(\psi_2(\rho\bar z,\gamma)-\rho\bar\delta\,\psi_1(\rho\bar z,\gamma)\bigr)\bar z  +\frac{(1-k_2) \bar\delta}{N(k_2\rho\bar\delta)}
\Biggr] \bar{\delta}  +\frac{\varphi}{L}\bar{z}\,, \label{eq:hat_a_thm}
    \end{aligned}
\end{equation}
    with $k_1,k_3,k_4 > 0$ and $k_2 > 1$, where $\varphi(t)$ is defined in \eqref{eq:steering_1} and the functions $\psi_1(r,s)$, $\psi_2(r,s)$ and $N(s)$ are defined in \eqref{eq:psi_1_definition}, \eqref{eq:psi_2_definition} and \eqref{eq:N(s)}, respectively. Then, there exists $K, \lambda > 0$ such that
    \begin{equation}
        \hspace*{-0.2cm} \abs{(\rho(t),\bar{\delta}(t), \bar{z}(t), \bar{v}(t))}_{\mathcal{S}} \le K \abs{(\rho_0,\bar{\delta}_0, \bar{z}_0, \bar{v}_0)}_{\mathcal{S}} {\rm e}^{-\lambda (t-t_0)}, \label{eq:exp_bound_singular_coordinates}
    \end{equation}
    for all $t\ge t_0$. 
\end{theorem}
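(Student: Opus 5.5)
The plan is a direct Lyapunov argument with the quadratic function
\begin{equation*}
V(\rho,\bar\delta,\bar z,\bar v)=\tfrac12\left(\rho^2+\bar\delta^2+\bar z^2+\bar v^2\right)
\end{equation*}
on $\mathcal{S}$. The feedback laws \eqref{eq:varphi_thm} and \eqref{eq:hat_a_thm} appear to be built so that every indefinite cross term in $\dot V$ cancels. I would first check this cancellation, then derive an exponential decay estimate for $V$, and finally convert it into the bound \eqref{eq:exp_bound_singular_coordinates} in the norm \eqref{eq:metric_S}.

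\textbf{Step 1 (closed loop and well-posedness).} Since $\gamma=\rho\bar z-\arctan(k_2\rho\bar\delta)$, all of $\psi_1(\rho\bar z,\gamma)$, $\psi_2(\rho\bar z,\gamma)$, $N(k_2\rho\bar\delta)$, $\varphi$, $\tilde\varphi$ and $\hat a$ are smooth functions of $(\rho,\bar\delta,\bar z,\bar v)$. The closed loop \eqref{eq:polar_bicycle_singular} is therefore a smooth vector field on $\{\rho>0\}\times\mathbb{R}^3$, and indeed on all of $\mathbb{R}^4$, so solutions exist locally and are unique. The $\dot\rho$ equation has the form $\dot\rho=\rho\,g(\rho,\bar\delta,\bar z,\bar v)$ with $g$ smooth. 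Hence $\{\rho=0\}$ is invariant, and by uniqueness a solution starting with $\rho_0>0$ stays in $\{\rho>0\}$ for as long as it exists.

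\textbf{Step 2 (computing $\dot V$).} I would group $\dot V$ into four products and substitute \eqref{eq:varphi_thm} and \eqref{eq:hat_a_thm}.
\begin{itemize}
\item The product $\rho\dot\rho$ contributes $-k_1\rho^2$, the cross term $k_1N\psi_1\rho^3\bar z$, and the term $[\psi_1\rho^2\bar z-\rho/N]\rho\bar v$.
\item The product $\bar\delta\dot{\bar\delta}$ contributes $-k_1(k_2-1)\bar\delta^2$, the cross term $k_1N[\psi_2-\rho\bar\delta\psi_1]\bar z\bar\delta$, and the term $[\,\cdot\,]\bar v\bar\delta$.
\item The product $\bar z\dot{\bar z}$ contributes $-k_3\bar z^2$, the negatives of the two $\bar z$ cross terms above, and $-(\varphi/L)\bar v\bar z$.
\item The product $\bar v\dot{\bar v}=\bar v\hat a$ contributes $-k_4\bar v^2$, the negatives of the two $\bar v$ terms above, and $+(\varphi/L)\bar z\bar v$.
\end{itemize}
Summing, every cross term cancels. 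This includes the $\varphi$ terms, which cancel even though $\varphi$ itself depends on $\bar v$, because they enter as the same product $\varphi\bar z\bar v$ with opposite signs. The result is
\begin{equation*}
\dot V=-k_1\rho^2-k_1(k_2-1)\bar\delta^2-k_3\bar z^2-k_4\bar v^2\le-2\lambda V,
\end{equation*}
with $\lambda=\min\{k_1,k_1(k_2-1),k_3,k_4\}>0$. This is where the assumption $k_2>1$ is used.

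\textbf{Step 3 (completeness and the norm bound).} From Step 2, $V(t)\le V(t_0)e^{-2\lambda(t-t_0)}$. So solutions remain bounded and, by Step 1, remain in $\mathcal{S}$; they are therefore defined for all $t\ge t_0$. Let $w=(\rho,\bar\delta,\bar z,\bar v)$. Using $|w|_2\le|w|_{\mathcal{S}}\le 2|w|_2$ on $\mathcal{S}$ (where $\rho=|\rho|$), we get
\begin{equation*}
|w(t)|_{\mathcal{S}}\le 2|w(t)|_2\le 2|w_0|_2e^{-\lambda(t-t_0)}\le 2|w_0|_{\mathcal{S}}e^{-\lambda(t-t_0)}.
\end{equation*}
This gives \eqref{eq:exp_bound_singular_coordinates} with $K=2$.

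\textbf{Main obstacle.} The main obstacle is the bookkeeping in Step 2: verifying that each bracket in \eqref{eq:hat_a_thm} and \eqref{eq:varphi_thm} matches exactly the corresponding coefficient in \eqref{eq:polar_bicycle_singular}, including the factor $\rho$ multiplying the $\dot\rho$ bracket. The only subtle point beyond this is keeping $\rho$ strictly positive, which Step 1 settles through the multiplicative structure $\dot\rho=\rho\,g$.
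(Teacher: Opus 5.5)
Your proposal is correct and matches the paper's proof: the same quadratic $V$, the same exact cancellation giving $\dot V=-k_1\rho^2-k_1(k_2-1)\bar\delta^2-k_3\bar z^2-k_4\bar v^2$, and the same constants $K=2$ and $\lambda=\min\{k_1,k_1(k_2-1),k_3,k_4\}$. Your Step~1 is a useful addition that the paper handles only informally in the remark after the theorem: it shows that $\{\rho>0\}$ is invariant because $\dot\rho=\rho\,g$, and that solutions are forward complete.
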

\begin{proof}
    Consider the Lyapunov function 
    \begin{equation}
        V = \frac{1}{2}\left(\rho^2 + \bar{\delta}^2 + \bar{z}^2 + \bar{v}^2\right). \label{eq:CLF_singular}
    \end{equation}
    Its time-derivative of \eqref{eq:CLF_singular} along the solutions of \eqref{eq:polar_bicycle_singular} is 

{\allowdisplaybreaks
\begin{align}
\dot V
&= -k_1 \rho^2 -k_1(k_2-1)\bar{\delta}^2
+ \bar{z} \Biggl\{ \tilde{\varphi}
+ k_1N(k_2\rho\bar\delta)\psi_1(\rho\bar z,\gamma)\,\rho^3
\nonumber\\
&\hspace*{-0.3cm}+ k_1N(k_2\rho\bar\delta)\Bigl[\psi_2(\rho\bar z,\gamma)-\rho\bar\delta\,\psi_1(\rho\bar z,\gamma)\Bigr]\bar{\delta}
\Biggr\}
\nonumber\\
&\hspace*{-0.3cm}+ \bar{v}\Biggl\{ \hat{a}
+ \left[\psi_1(\rho\bar z,\gamma)\,\rho^2\bar z-\frac{\rho}{N(k_2\rho\bar\delta)}\right]\rho
+\Biggl[ \frac{(1-k_2)\bar\delta}{N(k_2\rho\bar\delta)}
\nonumber\\ 
&\qquad  +
\bigl(\psi_2(\rho\bar z,\gamma)-\rho\bar\delta\,\psi_1(\rho\bar z,\gamma)\bigr)\bar z
\Biggr]\bar{\delta}
-\frac{\varphi}{L}\bar z
\Biggr\}.
\label{eq:V_dot_gen}
\end{align}
}
Then, substituting \eqref{eq:varphi_thm} and \eqref{eq:hat_a_thm} in \eqref{eq:V_dot_gen}, yields
\begin{align}
    \dot V = -k_1 \rho^2 - k_1(k_2-1) \bar{\delta}^2 - k_3 \bar{z}^2 - k_4 \bar{v}^2 \,, \label{eq:dot_V_exp_sing}
\end{align}
which is negative when $k_2 > 1$ for all $(\rho, \bar{\delta}, \bar{z}, \bar{v}) \ne (0,0,0,0)$. Furthermore, 
considering \eqref{eq:CLF_singular} from \eqref{eq:dot_V_exp_sing}, we write
\begin{equation}
    \dot V \le - \underline{c} V 
\end{equation}
where $\underline{c} =2\min(k_1,k_1(k_2-1), k_3, k_4)$ from which \eqref{eq:exp_bound_singular_coordinates} follows, where $K = 2$ and $\lambda = \underline{c}/2$.
\end{proof}
\begin{remark}
The estimate \eqref{eq:exp_bound_singular_coordinates} in Thm.~\ref{thm:normalized_coordinates_exp_stability} implies forward completeness of the solution $(\rho,\bar{\delta},\bar{z}, \bar{v})$, as well as global exponential stability (in the $\mathcal{KL}$ sense) of the origin
on $\mathcal{S}$. The normalized variables introduced in \eqref{eq:bar_delta}--\eqref{eq:bar_v} remain well defined along every trajectory starting from $\mathcal S$. 
In particular, although the change of variables is singular at $\rho=0$, the estimate \eqref{eq:exp_bound_singular_coordinates} implies that $|\bar{\delta}(t)|$, $|\bar{z}(t)|$, and $|\bar{v}(t)|$ are each bounded from above by the norm $\abs{(\rho(t),\bar{\delta}(t),\bar{z}(t),\bar{v}(t))}_{\mathcal S}$, and therefore decay exponentially for all $t\ge t_0$. Since the trajectories evolve on $\mathcal S=\{\rho>0\}\times\mathbb R^3$, one has $\rho(t)>0$ for every finite $t\ge t_0$, so the singular boundary $\rho=0$ is never reached in finite time and the normalized states $(\rho(t), \bar{\delta}(t), \bar{z}(t), \bar{v}(t))$ remain bounded for any finite $t\ge t_0$.
Moreover, the quadratic function \eqref{eq:CLF_singular} is a global strict CLF for  \eqref{eq:polar_bicycle_singular} on $\mathcal S$. 
\end{remark}

\subsection{Global exponential output stability in polar coordinates}
\begin{proposition}
\label{prop:output_stab}
    Consider \eqref{eq:bicycle_polar} in closed-loop with \eqref{eq:acceleration_first}, \eqref{eq:acceleration_second}, \eqref{eq:steering_1}, \eqref{eq:varphi_thm} and \eqref{eq:hat_a_thm} and let $\Theta := (\rho,\bar{\delta},\bar{z},\bar{v})^\top$. Then, there exists $C, C_{\varphi}, C_{a}, \lambda > 0$ such that the estimates
    \begin{equation}
\abs{(\rho(t),\delta(t),\gamma(t),v(t))}_{\mathcal S} \le C  \Bigl( 1 +  \abs{\Theta_0}^2_{\mathcal S} \Bigr)
   |\Theta_0|_{\mathcal S} e^{-\lambda (t-t_0)}\,, \label{eq:polar_bound}
    \end{equation}
    and 
    \begin{align}
        \abs{\varphi(t)} &\le C_{\varphi} \left(1 + \abs{\Theta_0}_{\mathcal{S}}^4\right) \abs{\Theta_0}_{\mathcal{S}} {\rm e}^{-\lambda (t-t_0)} \,, \label{eq:varphi_bound} \\
        \abs{a(t)} &\le C_{a} \left(1 + \abs{\Theta_0}_{\mathcal{S}}^8\right) \abs{\Theta_0}_{\mathcal{S}} {\rm e}^{-\lambda (t-t_0)} \,, \label{eq:a_bound}
    \end{align}
    hold on $\mathcal{S}$.
\end{proposition}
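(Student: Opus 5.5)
The plan is to read the polar variables and the two inputs as outputs of the range-normalized state $\Theta$, write each as a polynomial-times-bounded function of $\Theta$, and then insert the exponential estimate \eqref{eq:exp_bound_singular_coordinates} of Thm.~\ref{thm:normalized_coordinates_exp_stability}. The one extra fact we use is that $\rho(t)\le |\Theta(t)|_{\mathcal S}$. Writing $r(t):=|\Theta(t)|_{\mathcal S}\le 2|\Theta_0|_{\mathcal S}e^{-\lambda(t-t_0)}$, we get in particular $r(t)\le 2|\Theta_0|_{\mathcal S}$ for all $t\ge t_0$. Any monomial $r^{m}$ with $m\ge 1$ can then be bounded by $2^m|\Theta_0|^{m-1}_{\mathcal S}\,|\Theta_0|_{\mathcal S}e^{-\lambda(t-t_0)}$. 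A polynomial in $r$ with no constant term is therefore bounded by $C(1+|\Theta_0|^{p}_{\mathcal S})|\Theta_0|_{\mathcal S}e^{-\lambda(t-t_0)}$, where $p$ is its degree minus one, since $|\Theta_0|^{j}\le 1+|\Theta_0|^{p}$ for $0\le j\le p$. The whole proof thus reduces to degree counting.

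For \eqref{eq:polar_bound}, invert the transformations. We have $\delta=\rho\bar\delta$, so $|\delta|\le r^2$. Next, $\gamma=z-\arctan(k_2\delta)=\rho\bar z-\arctan(k_2\rho\bar\delta)$, and $|\arctan s|\le|s|$ gives $|\gamma|\le r^2+k_2r^2$. Finally $v=\rho\bar v+k_1N(k_2\rho\bar\delta)\rho$, and $N(s)\le 1+|s|$ gives $|v|\le r^2+k_1(r+k_2r^3)$. Adding these, $|(\rho,\delta,\gamma,v)|_{\mathcal S}\le c(r+r^2+r^3)$. By the monomial bound this is at most $C(1+|\Theta_0|^2_{\mathcal S})|\Theta_0|_{\mathcal S}e^{-\lambda(t-t_0)}$, which is \eqref{eq:polar_bound}.

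For \eqref{eq:varphi_bound}, first substitute $\tilde\varphi$ from \eqref{eq:varphi_thm} into \eqref{eq:steering_1}. Use $|\psi_i|\le1$ and $1/N\le1$. The prefactor is $L/(k_1N(k_2\rho\bar\delta))\le L/k_1$, so $N$ appearing in the denominator is harmless. The factors of $N$ in numerators are bounded by $1+k_2r^2$. Each term inside the braces is then a product of a polynomial in $r$ and one of $\bar z,\bar\delta,\rho^3$, so it has no constant term. The worst term is of the type $\rho N\bar z^2\bar v$, or $N\cdot\rho\bar\delta\,\bar z\bar v$ coming from $N\psi_2\bar z\bar v$ with $N\le 1+k_2r^2$. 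Its degree is at most $5$, which gives the exponent $4$.

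For \eqref{eq:a_bound}, compose \eqref{eq:acceleration_first}, \eqref{eq:acceleration_second} and \eqref{eq:hat_a_thm}. The function $\hat a$ contains $\varphi\bar z/L$, which is of degree at most $6$ in $r$. Multiplying by $\rho$ in $\tilde a$ raises this to degree $7$, and the terms like $\rho N\rho\bar z\bar v^2$ have comparable degree. In \eqref{eq:acceleration_first}, $v$ is of degree at most $3$ and multiplies a bracket containing $N\cdot N\psi_1 z$-type terms. Here $z=\rho\bar z$, so the bracket is of degree at most about $6$, and the total degree is at most $9$, giving the exponent $8$.

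The main obstacle is this bookkeeping: checking that every term of $a$ and $\varphi$ vanishes at $\Theta=0$ (no constant term) and that the degrees stay within $5$ and $9$. One subtle point is the constant $1$ inside the bracket of \eqref{eq:acceleration_first}. It is multiplied by $k_1 v$, which is $O(r)$, so it still causes no constant term. I would tabulate the degree of each term rather than track constants. If a term exceeded the claimed degree, I would absorb it using $r\le 2|\Theta_0|_{\mathcal S}$ and enlarge the constant, noting that the stated powers are upper bounds of this type.
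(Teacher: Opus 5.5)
Your proposal is the paper's argument. You invert the normalizations so that $\delta$, $\gamma$, $v$, $\varphi$, $a$ become functions of $\Theta$, bounded by polynomials in $|\Theta|_{\mathcal S}$ without constant term via $|\psi_i|\le 1$, $N\ge 1$, $N(s)\le 1+|s|$ and $|\arctan s|\le |s|$. You then insert $|\Theta(t)|_{\mathcal S}\le 2|\Theta_0|_{\mathcal S}e^{-\lambda(t-t_0)}$. The exponents $2$, $4$, $8$ you arrive at are correct.

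Your term-by-term bookkeeping, however, cites terms that do not occur in the formulas, and one fallback step is invalid.
\begin{itemize}
\item The $\bar v$-bracket in \eqref{eq:steering_1} has no $N$ in a numerator, so there is no $\rho N\bar z^2\bar v$. Such a term would have degree $6$ and would break the exponent $4$. The actual degree-$5$ terms are $\rho N\psi_1\bar z^2$ and, from \eqref{eq:varphi_thm}, $N\psi_1\rho^3$ and $N\psi_1\rho\bar\delta^2$.
\item The bracket in \eqref{eq:acceleration_first} carries a single factor $N$ and has degree at most $4$. Also, $\tilde a$ contains $\rho N\psi_1\rho\bar z\bar v$ rather than $\rho N\rho\bar z\bar v^2$. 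So $a$ is in fact of degree at most $7$, and the exponent $8$ is valid but not sharp.
\item Your fallback of absorbing an excess degree into the constant via $r\le 2|\Theta_0|_{\mathcal S}$ would not be legitimate. The constants $C_\varphi$ and $C_a$ must not depend on $\Theta_0$, and $|\Theta_0|_{\mathcal S}^{m-1}$ is not dominated by $1+|\Theta_0|_{\mathcal S}^{p}$ when $m-1>p$. It is simply not needed here, because the true degrees already fit the stated exponents.
\end{itemize}
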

\begin{proof}
Let us introduce the shorthand notation for the states $Y := (\rho,\delta,\gamma,v)^\top$, then 
the corresponding norms of $\Theta$ and $Y$ on $\mathcal{S}$ with \eqref{eq:metric_S} are 
\begin{equation}
|\Theta|_{\mathcal S} := \rho + |\bar{\delta}| + |\bar{z}| + |\bar{v}|\,, \quad |Y|_{\mathcal S} := \rho + |\delta| + |\gamma| + |v|. 
\end{equation}
From \eqref{eq:bar_z} and \eqref{eq:bar_v}, 
it follows that
\begin{equation}
    \gamma = \rho \bar{z} - \arctan(k_2 \rho \bar{\delta}), \label{eq:gamma_from_X} 
\end{equation}
\begin{equation}
    v = \rho \bar{v} + k_1 \rho N(k_2 \rho \bar{\delta}). \label{eq:v_from_X}
\end{equation}
Since $\rho \le |\Theta|_{\mathcal S}$, $|\bar{\delta}| \le |\Theta|_{\mathcal S}$,
$|\bar{z}| \le |\Theta|_{\mathcal S}$, $|\bar{v}| \le |\Theta|_{\mathcal S}$,
and $|\arctan(s)| \le |s|$, as well as
$N(s) = \sqrt{1+s^2} \le 1 + |s|$,
it follows from \eqref{eq:bar_delta},  that
\begin{align}
|\delta|
= \rho |\bar{\delta}|
\le |\Theta|_{\mathcal S}^2. \label{eq:delta_bound}
\end{align}
Also, by \eqref{eq:gamma_from_X},
\begin{equation}
|\gamma|
\le \rho |\bar{z}| + |\arctan(k_2 \rho \bar{\delta})| 
\le (1+k_2)|\Theta|_{\mathcal S}^2. \label{eq:gamma_bound}
\end{equation}
Finally, by \eqref{eq:v_from_X},
\begin{align}
|v|
&\le \rho |\bar{v}| + k_1 \rho N(k_2 \rho \bar{\delta}) \le \rho |\bar{v}| + k_1 \rho \bigl(1 + k_2 \rho |\bar{\delta}|\bigr) \nonumber \\
&\le |\Theta|_{\mathcal S}^2 + k_1 |\Theta|_{\mathcal S} + k_1 k_2 |\Theta|_{\mathcal S}^3. \label{eq:v_bound}
\end{align}
Therefore, combining \eqref{eq:delta_bound}, \eqref{eq:gamma_bound} and \eqref{eq:v_bound}, we obtain
\begin{equation}
|Y|_{\mathcal S}
\le (1+k_1)|\Theta|_{\mathcal S} + (3+k_2)|\Theta|_{\mathcal S}^2 + k_1 k_2 |\Theta|_{\mathcal S}^3.
\end{equation}
Equivalently,
\begin{equation}
|Y|_{\mathcal S}
\le \underline{C} \bigl(1 + |\Theta|_{\mathcal S}^2\bigr)|\Theta|_{\mathcal S}, \label{eq:temp}
\end{equation}
with $\underline{C} =  2\max\{1+k_1,\; 3+k_2,\; k_1 k_2\}$. Note that \eqref{eq:exp_bound_singular_coordinates} can be expressed as $|\Theta(t)|_{\mathcal S}
\le K |\Theta_0|_{\mathcal S} e^{-\lambda (t-t_0)}$. 
Then, from \eqref{eq:temp} and since $K = 2$ we obtain
\begin{equation}
|Y(t)|_{\mathcal S} \le 8\underline{C}  \Bigl( 1 +  |\Theta_0|^2_{\mathcal S} \Bigr)
   |\Theta_0)|_{\mathcal S} e^{-\lambda (t-t_0)}\,,
\end{equation}
from which \eqref{eq:polar_bound} follows.
 Furthermore, considering the fact that  $N(s)^{-1} \le 1$, $\abs{\psi_1(r,s)} \le 1$ and $\abs{\psi_2(r,s)} \le 1$  from \eqref{eq:steering_1} and \eqref{eq:varphi_thm}, after some lengthy calculations, we obtain
\begin{equation}
|\varphi|
\le \tilde{C_\varphi} \bigl(1+|\Theta|_{\mathcal S}^4\bigr)|\Theta|_{\mathcal S},
\end{equation}
where $\tilde C_{\phi}:=
L\left(
\frac{k_3}{k_1}
+6+5k_2+k_2^2
+\frac{3+2k_2+k_2^2}{k_1}
\right)$.
Likewise, using \eqref{eq:acceleration_first}, \eqref{eq:acceleration_second}, and \eqref{eq:hat_a_thm}, one obtains
\begin{equation}
|a|
\le \tilde{C}_\alpha \bigl(1+|\Theta|_{\mathcal S}^8\bigr)|\Theta|_{\mathcal S},
\end{equation}
where 
$\tilde C_{\alpha}:=
2k_4+2k_2+8+2k_1+k_1k_2
+4\left(
k_3/k_1
+6+5k_2+k_2^2
+(3+2k_2+k_2^2)/k_1
\right)
+2k_1(1+k_1+k_1k_2)(2+k_2+k_2^2+3k_2^3)$.
Then, it follows from \eqref{eq:exp_bound_singular_coordinates} that
\begin{equation}
|\varphi(t)|
\le
32 
\tilde{C}_\varphi
\bigl(1+|\Theta_0|_{\mathcal S}^4\bigr)
|\Theta_0|_{\mathcal S}
e^{-\lambda (t-t_0)},
\end{equation}
and
\begin{equation}
|a(t)|
\le
512 \tilde{C}_\alpha
\bigl(1+|\Theta_0|_{\mathcal S}^8\bigr)
|\Theta_0|_{\mathcal S}
e^{-\lambda (t-t_0)}.
\end{equation}
from which \eqref{eq:varphi_bound} and \eqref{eq:a_bound} follow.
\end{proof}
\begin{remark}
   If $Y(t)=(\rho(t),\delta(t),\gamma(t),v(t))^\top$ is viewed as the output of the range-normalized system \eqref{eq:polar_bicycle_singular} with the feedback laws \eqref{eq:varphi_thm} and \eqref{eq:hat_a_thm}, whose state is $\Theta(t)=(\rho(t),\bar{\delta}(t),\bar{z}(t),\bar{v}(t))^\top$, then the estimate \eqref{eq:polar_bound} in Proposition~\ref{prop:output_stab} corresponds to the property of \textit{uniform global exponential output stability} on $\mathcal S$ (cf. the def. of UGAOS in \cite{karafyllis2024global}).
   This is the best possible outcome since $\Theta(t)$ cannot be globally bounded in terms of $Y(t)$, as $\Theta(t)$ contains the range-normalized states $\delta(t)/\rho(t)$, $\gamma(t)/\rho(t)$, and $v(t)/\rho(t)$, which may blow up as $\rho(t)\to0^+$ even when $Y(t)$ remains bounded (e.g., $\rho_0=\varepsilon$, $\delta_0=1$, $\gamma_0=v_0=0$ gives $|Y_0|_{\mathcal S}=\varepsilon+1$ while $|\Theta_0|_{\mathcal S}\ge 1/\varepsilon\to\infty$).
   Nevertheless, the bounds \eqref{eq:varphi_bound} and \eqref{eq:a_bound} show that the inputs $a(t)$ and $\varphi(t)$ satisfy the small control property in the sense that they both remain bounded and converge to zero as $\Theta(t)\to 0$. This does not exclude large transient control values for certain initial conditions. In particular, if $\rho_0$ is very small while $v_0$ is large, then the resulting braking and steering demands may be correspondingly large. This reflects the extreme case of approaching the parking position at high speed, where large actuation is unavoidable to decelerate and complete the maneuver.
\end{remark}

\begin{figure}
    \centering
    \includegraphics[width=0.9\linewidth]{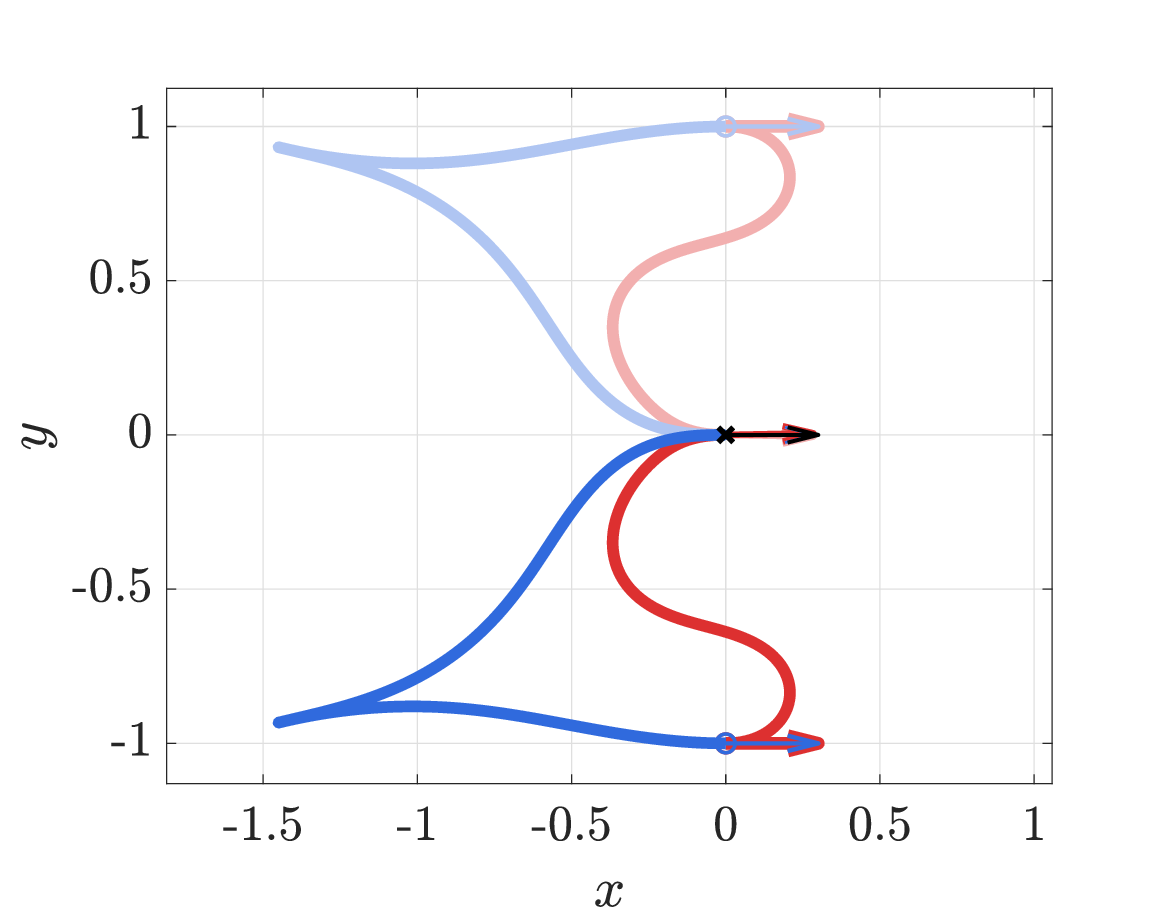}
    \caption{Parallel parking trajectories of \eqref{eq:bicycle_cartesian} with wheelbase 
$L = 4.5$, under the acceleration control law \eqref{eq:acceleration_first}, \eqref{eq:acceleration_second}, \eqref{eq:hat_a_thm} and the steering control law \eqref{eq:steering_1}, \eqref{eq:varphi_thm}. }
    \label{fig:cartesian_pp_trajectories}
\end{figure}
\subsection{Attractivity sans stability in Cartesian coordinates}
\begin{corollary} \label{eq:corollary}
Consider \eqref{eq:bicycle_cartesian} in closed-loop with
\eqref{eq:acceleration_first}, \eqref{eq:acceleration_second},
\eqref{eq:steering_1}, \eqref{eq:varphi_thm}, and \eqref{eq:hat_a_thm}.
For every initial condition $(x_0,y_0,\theta_0,v_0)\in\mathbb{R}^4$
such that $x_0^2+y_0^2>0$, there exist constants $C_0(k_1,k_2)>0$ and $\lambda>0$ such that
\begin{equation}
|x(t)|+|y(t)|+|\theta(t)|+|v(t)|
\le
C_0(1+\abs{X_0}^2)\abs{X_0}e^{-\lambda(t-t_0)}\,, \label{eq:cartesian_estimate}
\end{equation}
for all $ t\ge t_0$, where 
\begin{equation}
X_0:=
\frac{x_0^2+y_0^2+|\theta_0|+|v_0|+1}{\sqrt{x_0^2+y_0^2}}.
\end{equation}
\end{corollary}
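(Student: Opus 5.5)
The corollary follows from Proposition~\ref{prop:output_stab} once two things are in hand: a conversion from the polar output $Y=(\rho,\delta,\gamma,v)$ back to Cartesian quantities, and an upper bound on $|\Theta_0|_{\mathcal S}$ in terms of $X_0$.

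\emph{Step 1: polar to Cartesian.} Since $|x|,|y|\le\rho$, the position is controlled by $\rho(t)$. For the heading I use $\theta=\delta-\gamma$, so $|\theta|\le|\delta|+|\gamma|$, while $v$ appears in $Y$ directly. This gives $|x|+|y|+|\theta|+|v|\le 2|Y|_{\mathcal S}$. The identity $\theta=\delta-\gamma$ has to be read along the continuous lift of the angles; this is legitimate because the closed-loop polar trajectory stays in $\mathcal S$ and $\delta,\gamma$ are propagated by the ODE \eqref{eq:bicycle_polar}, not recomputed through atan2. Proposition~\ref{prop:output_stab} then yields
\[
|x(t)|+|y(t)|+|\theta(t)|+|v(t)|\le 2C\bigl(1+|\Theta_0|_{\mathcal S}^2\bigr)|\Theta_0|_{\mathcal S}e^{-\lambda(t-t_0)} .
\]

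\emph{Step 2: bound $|\Theta_0|_{\mathcal S}$ by $X_0$.} Set $\rho_0=\sqrt{x_0^2+y_0^2}>0$, and fix the angle representatives so that $|\delta_0|\le\pi$ and $|\gamma_0|\le|\delta_0|+|\theta_0|\le\pi+|\theta_0|$. Then:
\begin{itemize}
\item $|\bar\delta_0|\le \pi/\rho_0$;
\item $|\bar z_0|\le\bigl(|\gamma_0|+|\arctan(k_2\delta_0)|\bigr)/\rho_0\le(2\pi+|\theta_0|)/\rho_0$;
\item $|\bar v_0|\le|v_0|/\rho_0+k_1N(k_2\delta_0)\le |v_0|/\rho_0+k_1(1+k_2\pi)$;
\item $\rho_0=\rho_0^2/\rho_0\le x_0^2+y_0^2$ divided by $\rho_0$, which is at most $X_0$.
\end{itemize}
The constant terms $\pi$, $2\pi$ and $k_1(1+k_2\pi)$ remain to be absorbed. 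Here the ``$+1$'' in the numerator of $X_0$ is what handles the $1/\rho_0$-type pieces, and $X_0\ge 1/\rho_0$ covers them. The term $k_1(1+k_2\pi)$ is a pure constant, so it needs $X_0$ bounded below by a fixed constant. By AM--GM, $X_0\ge \rho_0+1/\rho_0\ge2$, and hence every constant is at most a multiple of $X_0$. Combining these, $|\Theta_0|_{\mathcal S}\le c(k_1,k_2)X_0$ with, for example, $c=1+2\pi+1+k_1(1+k_2\pi)$.

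\emph{Step 3: assemble.} Substituting Step 2 into Step 1 gives $(1+c^2X_0^2)\,cX_0\le \max(1,c^2)\,c\,(1+X_0^2)X_0$, which is the estimate \eqref{eq:cartesian_estimate} with $C_0=2C\max(1,c^2)c$.

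The main obstacle is the angle bookkeeping. The Cartesian $\theta_0$ and the polar $\gamma_0$ must be related consistently (mod $2\pi$) at $t_0$. After that, $\theta(t)$ must be identified with the continuous lift $\delta(t)-\gamma(t)$, so that the bound on $|\theta(t)|$ is meaningful rather than valid only modulo $2\pi$. I would handle this by defining $\theta$ as a real-valued (unwrapped) state in \eqref{eq:bicycle_cartesian}, choosing $\delta_0\in(0,2\pi]$ shifted to $[-\pi,\pi]$, and setting $\gamma_0=\delta_0-\theta_0$. Then $|\gamma_0|\le\pi+|\theta_0|$ exactly, and $\theta=\delta-\gamma$ holds for all $t$ by uniqueness of solutions.
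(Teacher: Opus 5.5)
Your argument is correct and is essentially the paper's own, which is only sketched in the remark after the corollary. That sketch combines Proposition~\ref{prop:output_stab}, the conversion $|x|+|y|+|\theta|+|v|\le\sqrt{2}\,|(\rho,\delta,\gamma,v)|_{\mathcal S}$ (your factor $2$ works equally well) and a bound $|\Theta_0|_{\mathcal S}\le c\,X_0$; your treatment of the angle lift and your use of $X_0\ge 2$ to absorb $k_1N(k_2\delta_0)$ fill in details the paper leaves implicit. The only slip is the sample constant: your listed bounds put a coefficient $3\pi$ on $1/\rho_0$, so they actually yield something like $c=3+3\pi+k_1(1+k_2\pi)$, and nothing else in the argument changes.
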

\vspace{0.3\baselineskip}

\begin{remark}
The corollary follows from Prop.~\ref{prop:output_stab}, together with $|x|+|y|+|\theta|+|v|
\le
\sqrt{2}\,|(\rho,\delta,\gamma,v)|_{\mathcal S}$, and the bounds $|\bar{\delta}_0|\le c_1 X_0$,
$|\bar{z}_0|\le c_2 X_0$,
$|\bar{v}_0|\le c_3 X_0$ which yield $|(\rho_0,\bar\delta_0,\bar z_0,\bar v_0)|_{\mathcal S}
\le c_4 X_0$
for some  $c_1, c_2, c_3, c_4>0$.
The estimate \eqref{eq:cartesian_estimate} is not a stability estimate in Cartesian coordinates. Rather, it characterizes the Cartesian equilibrium by global Lagrange stability, global uniform exponential attractivity, and Lyapunov instability. This combination of properties is referred to as global uniform Lagrange asymptotic stability (\emph{GULAS}). Here, Lyapunov instability is a structural property of nonholonomic systems subject to the Brockett--Ryan--Coron--Rosier conditions. In particular, trajectories are globally bounded and converge exponentially to the equilibrium, but the equilibrium is not Lyapunov stable in the Cartesian variables. This is not in conflict with the implication from GULAS to GAS for Lipschitz systems~\cite{karafyllis2011stability}, since under the present feedback laws the Cartesian closed-loop system is not Lipschitz.

\end{remark}
\begin{figure}[t]
\centering
    \hspace*{-0.5cm}\includegraphics[scale=0.27]{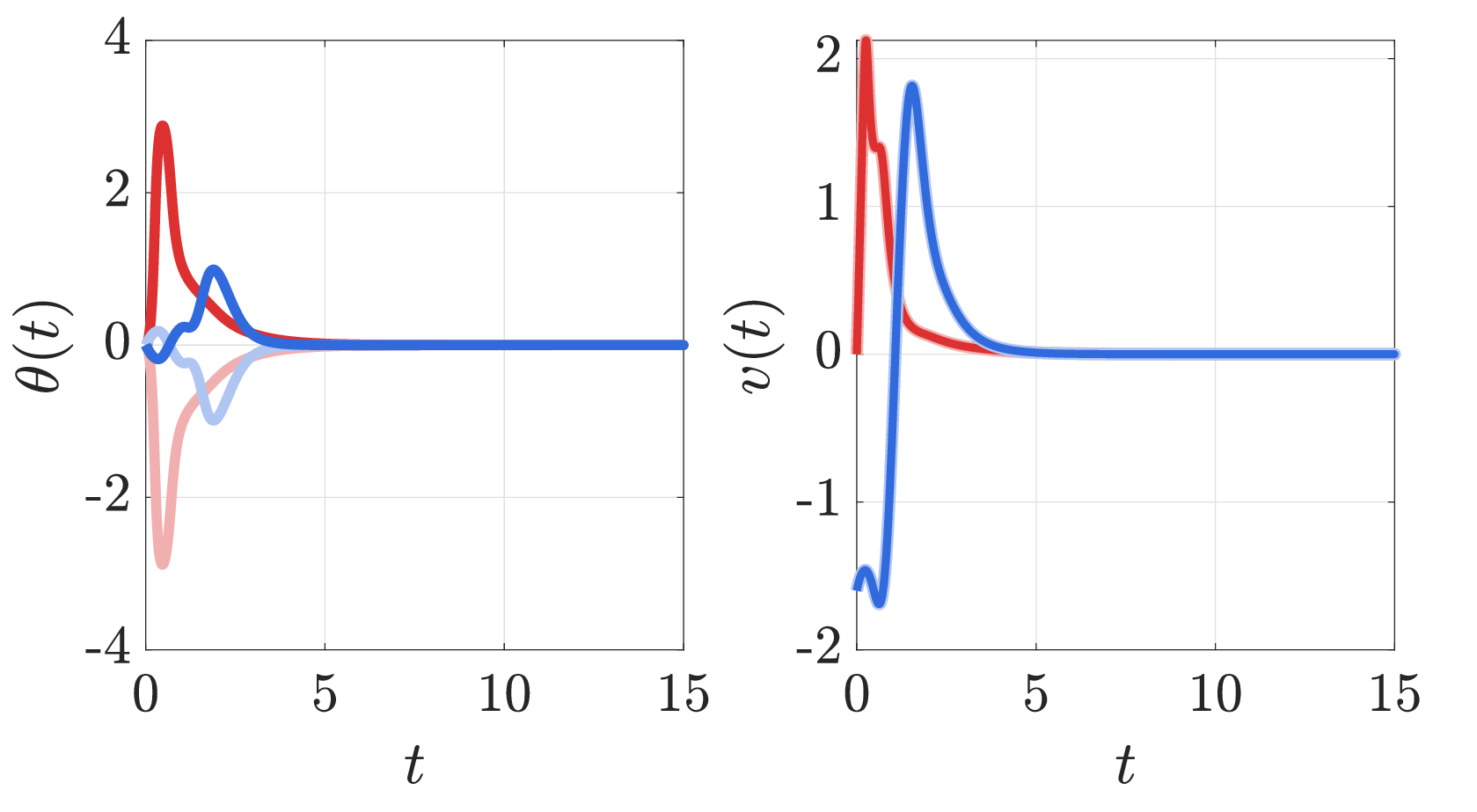}
\caption{The orientation and velocity profiles of \eqref{eq:bicycle_cartesian} with wheelbase 
$L = 4.5$, under the acceleration control law \eqref{eq:acceleration_first}, \eqref{eq:acceleration_second}, \eqref{eq:hat_a_thm} and the steering control law \eqref{eq:steering_1}, \eqref{eq:varphi_thm}.}
\label{fig:v_theta_trajectories}
\end{figure}

\section{Parallel Parking: A Benchmark for Nonholonomic Stabilization}
We illustrate the trajectories generated by the proposed control laws for a parallel parking maneuver. As shown in Fig.~\ref{fig:cartesian_pp_trajectories} and Fig.~\ref{fig:v_theta_trajectories}, the desired parking configuration is the origin, namely $x^*=y^*=\theta^*=v^*=0$ where the target heading $\theta^*$ is indicated by a black arrow. The initial conditions are $x_0=0$, $y_0=\pm 1$, and $\theta_0=0$ (initial heading indicated by an arrow of corresponding color). For each value of $y_0$, we simulate a forward maneuver with $v_0=0$ and gains $k_1=0.6$, $k_2=1.4$, $k_3=1$, $k_4=2$, together with a reverse maneuver with $v_0=-1.6$ and gains $k_1=1.2$, $k_2=1.5$, $k_3=0.4$, $k_4=0.8$. These trajectories clearly exhibit human-like parallel parking behavior. Furthermore, by \eqref{eq:backstepping_v_hat}, the speed $v$ is attracted to the positive manifold $N(k_2\delta)\rho$, which causes the feedback to favor forward motion. The inputs in Fig.~\ref{fig:inputs} remain smooth and satisfy the prescribed bounds.

\section{Conclusion}
We derive the first backstepping control laws for parking of the bicycle model of a car in polar coordinates. By introducing range-normalized coordinates that encode the parking geometry, our feedback laws achieve global exponential output stability in the polar coordinates and generate human-like parking maneuvers. This also sets the stage for safety extensions, optimality-based designs built on the developed CLF~\eqref{eq:CLF_singular}, and analogous control laws for tractor-trailer systems.

\begin{figure}[t]
    \centering
\includegraphics[width=0.85\linewidth]{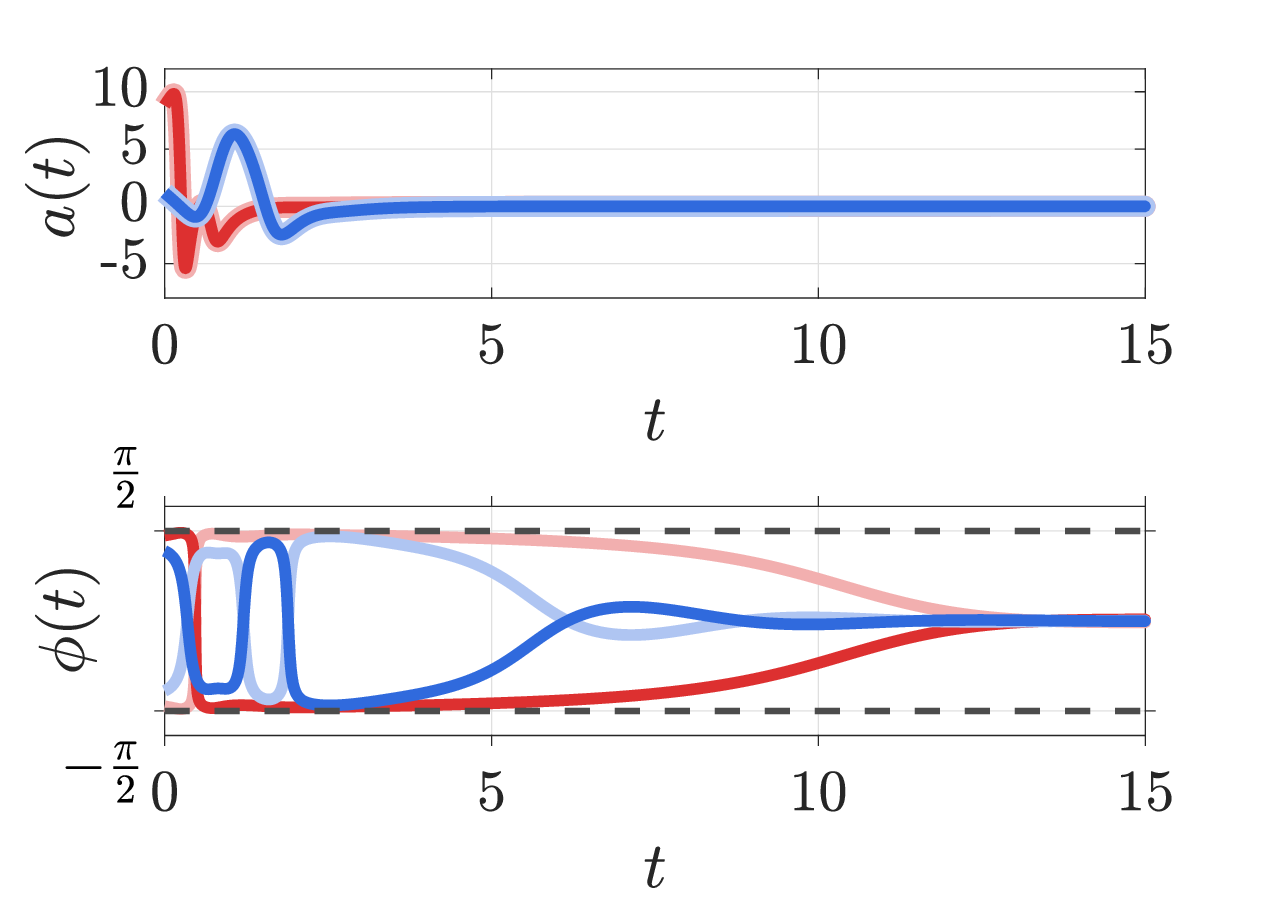}
    \caption{The longitudinal acceleration \eqref{eq:acceleration_first}, \eqref{eq:acceleration_second}, \eqref{eq:hat_a_thm} and the steering angle \eqref{eq:steering_1}, \eqref{eq:varphi_thm} with $\phi(t) = \arctan(\varphi(t))$ for \eqref{eq:bicycle_cartesian}. }
    \label{fig:inputs}
\end{figure}

\bibliographystyle{IEEEtranS}
\bibliography{root_ACC1}

\clearpage

\end{document}